\documentclass[11pt]{article}

\usepackage{etex}
\usepackage{graphics}
\usepackage{color}
\usepackage{cancel}
\usepackage{stmaryrd}
\usepackage{amsmath}
\usepackage{amsthm}
\usepackage{amssymb}
\usepackage{proof}
\usepackage{verbatim,cmll}
\usepackage{setspace}
\usepackage{lscape}
\usepackage{latexsym,relsize}
\usepackage{amscd}
\usepackage{multicol}
\usepackage{bussproofs}
\usepackage[position=top]{subfig}
\usepackage{leqno}
\usepackage{tikz}
\usepackage{authblk}
\usepackage{multicol}
\usepackage{tkz-graph}
\usetikzlibrary{arrows}
\usetikzlibrary{matrix}
\usetikzlibrary{patterns}
\usetikzlibrary{shapes}
\usepackage{authblk}

\usepackage[arc,all]{xy}

\usepackage[left=3cm,top=3cm,right=3cm,bottom=2cm]{geometry}

\def\fCenter{{\mbox{$\ \vdash\ $}}}

\EnableBpAbbreviations

\renewcommand{\epsilon}{\varepsilon}

\tikzset{
	treenode/.style = {align=center, inner sep=0pt, text centered},
	Ske/.style = {treenode, ellipse, double, draw=black,
		minimum width=6pt, thick},
	PIA/.style = {treenode, ellipse, black, draw=black,
		minimum width=6pt},
	Crit/.style = {treenode, rectangle, draw=black,
		minimum width=0.5em, minimum height=0.5em}
}

\theoremstyle{plain}
\newtheorem{thm}{Theorem}

\newtheorem{lem}[thm]{Lemma}

\newtheorem{lemma}[thm]{Lemma}

\theoremstyle{definition}

\DeclareSymbolFont{mysymbol}{U}{txsyc}{m}{n}
\SetSymbolFont{mysymbol}{bold}{U}{txsyc}{bx}{n}
\DeclareFontSubstitution{U}{txsyc}{m}{n}
\DeclareMathSymbol{\bdiam}{\mathord}{mysymbol}{95}

\usepackage{graphicx}

\title{A proof-theoretic approach to abstract interpretation}
\date{}
\author[1]{Vijay D'Silva}
\author[2,3]{Alessandra Palmigiano}
\author[4]{Apostolos Tzimoulis}
\author[5]{Caterina Urban}
\affil[1]{Google Inc., San Francisco}
\affil[2]{Vrije Universiteit Amsterdam, The Netherlands}
\affil[3]{University of Johannesburg, South Africa}
\affil[4]{University of Luxembourg, Luxembourg}
\affil[5]{INRIA, France}

\begin{document}

\maketitle

\section{Introduction}

Abstract interpretation is a theory of formal program verification which generates sound approximations of the semantics of programs, and 
 has been used as the basis of methods and effective algorithms to approximate undecidable or computationally intractable problems such as the verification of safety-critical software (e.g.~medical, nuclear, aviation software).

Typically,  a complex concrete model (such as the powerset $\mathcal{P}(\Sigma)$ of a possibly infinite set modelling program executions) is related to a model that can be efficiently represented and manipulated, which is usually a finite lattice $A$ that encodes the relevant -- logically interconnected -- properties about these executions, by means of an adjoint pair of maps. Specifically, the right adjoint (the {\em concretization} map $\gamma: A\to \mathcal{P}(\Sigma)$)  provides the intended interpretation of the symbolic properties (that is, $S\models a$ iff $S\subseteq \gamma(a)$ for any $S\in \mathcal{P}(\Sigma)$ and  $a\in A$); the left adjoint (the {\em abstraction} map $\alpha: \mathcal{P}(\Sigma)\to A$) classifies the executions of the given program according to their satisfying the relevant properties. Even though the concrete model is usually a Boolean algebra, the algebraic structure that the abstract lattice $A$ retains depends on the properties preserved by the concretization map $\gamma$.    


Although this theory was connected to logic since its inception \cite{cousot1,cousot2,jensen}, it is only in the last decade that the connection was made systematic. In particular, the notion of an (internal) logic of an abstraction was introduced in  \cite{schmidt2008internal}  and systematically related to the order-theoretic properties of the concretization map. In \cite{DSilva2017}, this line of research is further developed. Namely, the logics underlying  specific abstractions are identified, together with explicit specification of  proof-theoretic presentations for each of them.

The present note collects the preliminary results and observations of an ongoing work in which  we generalise the results of \cite{DSilva2017} and introduce a general procedure for generating the (internal) logic of an abstraction together with the specification of a proof system for it. The main idea  is to  generate a logic whose Lindenbaum-Tarski algebra is isomorphic to the  abstract algebra $A$.  

\section{Generating a logic corresponding to a finite abstraction}

We describe a general procedure for generating a logic $\mathcal{L}_A$ with one variable corresponding to a finite abstraction $\mathcal{A}=(A,\sqsubseteq, Op_A)$ with concretization $\gamma:\mathcal{A}\to (P(\text{Struct}),\subseteq, Op_c)$.  

\begin{enumerate}
	\item The logical connectives of the language of $\mathcal{L}_A$ are the connectives preserved by $\gamma$. Notice that, whenever the finite abstraction $\mathcal{A}$ is a distributive lattice,  being  finite, $\mathcal{A}$ will be a bi-Heyting algebra. So in this case, the implication and the co-implication of the abstract lattice  become available candidates as connectives of the language.
	\item for every point $a\in A$,  add a unary predicate symbol $a(x)$ to the language; 
	\item for every connective that is preserved by $\gamma$, add  the introduction rules appropriate to that connective in the proof system;
	\item for every binary connective  $\star$ in $\mathcal{L}_A$ and predicates $a(x),b(x),c(x)$ such that $a\star b=c$,  add a rule corresponding to the axiom $a(x)\star b(x)\dashv\vdash c(x)$ in the proof system;
	\item for every unary connective $\star$ and predicates $a(x)$ and $b(x)$ such that $\star a=b$,  add a rule corresponding to the axiom $\star a(x)\dashv\vdash b(x)$.
	\item for all predicates $a(x)$ and $b(x)$ such that $a\leq b$,  add a rule corresponding to the axiom $a(x)\vdash b(x)$.
\end{enumerate}

\begin{lem}
	The logic $\mathcal{L}_A$ is sound w.r.t.\ the concretization.
\end{lem}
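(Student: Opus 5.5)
The plan is to fix the intended semantics and then run the usual induction on derivations. Formulas of $\mathcal{L}_A$ are interpreted in the concrete lattice $(P(\text{Struct}),\subseteq,Op_c)$. Each predicate $a(x)$ is sent to $\gamma(a)$. Each connective $\star$ of $\mathcal{L}_A$, which by clause 1 is one preserved by $\gamma$, is sent to its concrete counterpart $\star_c\in Op_c$. Writing $\llbracket\varphi\rrbracket$ for this interpretation, a sequent $\varphi\vdash\psi$ is \emph{sound} if $\llbracket\varphi\rrbracket\subseteq\llbracket\psi\rrbracket$. Equivalently, for every $S\subseteq \text{Struct}$, if $S\models\varphi$ then $S\models\psi$, which matches the reading $S\models a$ iff $S\subseteq\gamma(a)$ from the introduction. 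I will prove by induction on the height of derivations that every derivable sequent is sound. To do this it suffices to show that each axiom and rule of the proof system preserves soundness.

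The first step handles the structural and introduction rules added in clause 3. These are the standard rules for connectives such as $\wedge,\vee$, the (co-)implication, and the monotone or antitone unary operators. Their soundness depends only on the concrete operations being the corresponding lattice operations on $P(\text{Struct})$: meet, join, Boolean or Heyting implication, and the relevant monotonicity or residuation properties of $Op_c$. This is the standard soundness of the corresponding logic w.r.t.\ its algebraic semantics in a Boolean algebra with operators, and I will take it as routine.

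The second step handles the non-logical axioms of clauses 4--6, and here the preservation hypothesis does the work. For clause 6, $a\leq b$ gives $\gamma(a)\subseteq\gamma(b)$, because $\gamma$ is a right adjoint and hence monotone. For clause 4, if $a\star b=c$ in $A$, then $\llbracket a(x)\star b(x)\rrbracket=\gamma(a)\star_c\gamma(b)=\gamma(a\star b)=\gamma(c)=\llbracket c(x)\rrbracket$. The middle equality is exactly the statement that $\gamma$ preserves $\star$. Hence both directions of $a(x)\star b(x)\dashv\vdash c(x)$ are sound. Clause 5 is treated identically using $\gamma(\star a)=\star_c\gamma(a)$.

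The main obstacle is making the first step honest. The only connectives allowed are those $\gamma$ preserves, but an introduction rule also needs the concrete operation to satisfy the algebraic laws the rule encodes. For example, if the abstract (co-)implication of a finite distributive $A$ is included, its rule is sound in the concrete model only if $\gamma$ sends it to the concrete implication and the latter is the residual of concrete meet. I would resolve this by reading ``preserved by $\gamma$'' as saying that $\gamma$ commutes with the connective, with the connective interpreted concretely by its standard (residuated) counterpart in $P(\text{Struct})$. Under that reading, each rule is the standard rule for that residuated structure, and its soundness holds in the powerset algebra. With that convention fixed, the induction closes.
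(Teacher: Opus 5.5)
Your proposal is correct and follows the paper's proof: induction on derivations in the concrete model with sequents read as inclusion, the item-6 axioms sound by monotonicity of $\gamma$, and the item-4 and item-5 axioms sound because $\gamma$ preserves the connectives. Your treatment of the introduction rules, as sound because the concrete operations satisfy the laws those rules encode, is a more careful version of the step the paper covers with a one-line appeal to preservation by $\gamma$.
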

\begin{proof}
	The proof is by induction on the depth of the proof. The axioms are sound and the preservation of the connectives of $\gamma$ guarantee that the introduction of the connectives are sound. Finally by the preservation of the connectives by $\gamma$ and the definition of satisfaction of sequents in the concrete algebra (simply by verifying if the interpretation of the left hand side is included in the interpretation of the right hand side), the rules corresponding to the relationship between predicates described in items 4 and 5 are also sound.
\end{proof}

Let $\mathbb{L}$ be the Lindenbaum-Tarski algebra of $\mathcal{L}_A$.

\begin{lemma}
	The algebra $\mathbb{L}$ is isomorphic to $\mathcal{A}$.
\end{lemma}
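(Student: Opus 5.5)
The plan is to exhibit an explicit map $h:\mathcal{A}\to\mathbb{L}$, $h(a)=[a(x)]$, sending each point of the abstract lattice to the equivalence class of the corresponding atomic predicate. We then show three things: that $h$ is a homomorphism for the connectives of $\mathcal{L}_A$, that it is order-embedding (hence injective), and that it is surjective. Together these give the isomorphism.

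\textbf{Homomorphism.} This step is immediate from items 4 and 5 of the construction. Whenever $a\star b=c$ in $\mathcal{A}$, the rule $a(x)\star b(x)\dashv\vdash c(x)$ yields $[a(x)]\star[b(x)]=[c(x)]$ in $\mathbb{L}$. The unary case is identical. Nullary connectives (for instance $\top,\bot$, if they are preserved by $\gamma$) are treated as $0$-ary instances of the same clause.

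\textbf{Order-embedding.} If $a\sqsubseteq b$, item 6 gives $a(x)\vdash b(x)$, so $h$ is monotone. For the converse we use the soundness lemma just proved. Suppose $a(x)\vdash b(x)$ is derivable. Interpreting $a(x)$ as $\gamma(a)$, soundness yields $\gamma(a)\subseteq\gamma(b)$. To conclude $a\sqsubseteq b$ we need $\gamma$ to be order-reflecting. This holds because $\gamma$ is the right adjoint of a Galois insertion, $\alpha\gamma=\mathrm{id}$, which is the standard assumption in abstract interpretation. I would state this assumption explicitly, since without it the lemma fails: two abstract points with the same concretization could not be separated by a sound logic. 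Hence $h$ reflects the order and is injective.

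\textbf{Surjectivity.} I expect this to be the main obstacle, and I would argue it by induction on formulas. The claim is that every formula $\varphi$ of $\mathcal{L}_A$ (built from atoms $a(x)$ using the preserved connectives) is provably equivalent to some atom $c(x)$. For atoms this is trivial. For $\varphi=\psi\star\chi$, the induction hypothesis gives $\psi\dashv\vdash a(x)$ and $\chi\dashv\vdash b(x)$. We then need a congruence property: the rules for $\star$ must allow replacing equivalents inside $\star$. This follows from monotonicity or antitonicity of $\star$ in each coordinate, which is derivable from the introduction rules of item 3, possibly with cut. Applying it gives $\psi\star\chi\dashv\vdash a(x)\star b(x)\dashv\vdash c(x)$ with $c=a\star b$, using item 4. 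The unary case is the same.

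The delicate point is confirming that the introduction rules of item 3, together with cut, really make provable equivalence a congruence, and that the order of $\mathbb{L}$ is the one induced by $\vdash$. I would check this connective by connective for the standard ones, namely $\wedge,\vee,\to$, co-implication and the modal/unary operators. Given these steps, $h$ is a bijective order-embedding preserving all operations, hence an isomorphism $\mathcal{A}\cong\mathbb{L}$.
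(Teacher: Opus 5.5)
Your homomorphism and surjectivity steps are the paper's own: items 4 and 5, plus induction on the complexity of formulas. The congruence property you single out is really part of what it means for $\mathbb{L}$ to be a well-defined Lindenbaum--Tarski algebra.

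The gap is in the order-reflection step, which is also what gives injectivity. You go through soundness with respect to the concretization, obtain $\gamma(a)\subseteq\gamma(b)$, and then need $\gamma$ to reflect the order, which you justify by assuming a Galois insertion. The lemma carries no such hypothesis. The construction is meant for any finite $\mathcal{A}$ and any concretization $\gamma$; for octagons, $\gamma$ does not even preserve conjunction, so it is not a right adjoint. The paper attaches the order-embedding hypothesis only to the \emph{next} lemma, completeness with respect to the concretization. So as written you prove a weaker statement.

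Your claim that the lemma fails without this hypothesis also confuses soundness with completeness. Soundness only bounds what is derivable, so it can never force $a(x)\dashv\vdash b(x)$. Only completeness with respect to $\gamma$ would identify two points with the same concretization, and that is exactly what is not asserted when $\gamma$ is not an embedding.

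The missing idea is to use $\mathcal{A}$ itself as the model. Interpret each predicate $c(x)$ as $c$ and each connective as the corresponding operation of $\mathcal{A}$. Every rule of $\mathcal{L}_A$ is read off from $\mathcal{A}$:
\begin{itemize}
\item the axioms of item 6 hold because $a\sqsubseteq b$;
\item those of items 4 and 5 hold by the very definition of the operations;
\item the introduction rules of item 3 hold because the connectives are the abstract operations (meets, joins, and in the distributive case the implication and co-implication of $\mathcal{A}$, cf.\ item 1), which satisfy their rules in $\mathcal{A}$.
\end{itemize}
Hence if $a(x)\vdash b(x)$ is derivable then $a\sqsubseteq b$, with no assumption on $\gamma$. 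This is how the paper gets injectivity and the order-embedding property.

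Your worry would have content only if item 3 imported rules valid in the concrete powerset but not in $\mathcal{A}$. In that reading an injective $\gamma$ would make the two notions of soundness coincide, but that is not the construction the lemma is about.
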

\begin{proof}
	Let $e:\mathcal{A}\to\mathbb{L}$ be defined as $e(a)=[a(x)]$, where $[a(x)]$ denotes the equivalence class of the predicate $a(x)$. First notice that $e$ is surjective since each element in the Lindenbaum-Tarski algebra of the logic $\mathcal{L}_A$ is  the equivalence class of some predicate. This follows immediately by induction on the complexity of the formulas and items 4 and 5 in the construction.
	
	To show that $e$ is injective let $[a(x)]=[b(x)]$. Then we have that $a(x)\dashv\vdash b(x)$. Notice that $\mathcal{L}_A$ is sound w.r.t.\ $\mathcal{A}$ when each predicate $c(x)$ is interpreted as $c$. Hence, we immediately get that $a=b$.
	
	Finally, $e$ being an order embedding and a homomorphism w.r.t.\ the connectives of the language follows immediately from items 4, 5 and 6, and the soundness of the logic w.r.t.\ $\mathcal{A}$.
\end{proof}

\begin{lem}
	If $\gamma$ is an order-embedding, then $\mathcal{L}_A$ is complete w.r.t.\ the concretization. 
\end{lem}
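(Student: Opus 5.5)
We argue by contraposition and reduce completeness to the previous lemma. Completeness w.r.t.\ the concretization means: if a sequent $\varphi \vdash \psi$ of $\mathcal{L}_A$ is valid in the concrete model, i.e.\ the concrete interpretation of $\varphi$ is included in that of $\psi$ whenever each predicate $c(x)$ is read as $\gamma(c)$, then $\varphi\vdash\psi$ is derivable. Here the connectives are read via $Op_c$.

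The first step uses the surjectivity part of the previous lemma. It gives predicates $a(x)$ and $b(x)$ with $\varphi \dashv\vdash a(x)$ and $\psi\dashv\vdash b(x)$ in $\mathcal{L}_A$. Since $\mathcal{L}_A$ is sound w.r.t.\ the concretization (first lemma), these interderivabilities hold concretely. Hence the concrete interpretations of $\varphi$ and $\psi$ coincide with $\gamma(a)$ and $\gamma(b)$. To make this fully explicit, I would prove by induction on $\varphi$ that its concrete interpretation equals $\gamma$ applied to its abstract interpretation. The base case holds by definition. The inductive step is exactly the fact that $\gamma$ preserves every connective of the language, by item 1 of the construction: $\gamma(a\star b)=\gamma(a)\star_c\gamma(b)$, and similarly for unary connectives.

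Next, assume $\varphi\vdash\psi$ is concretely valid. Then $\gamma(a)\subseteq\gamma(b)$. Because $\gamma$ is an order-embedding, i.e.\ order-reflecting as well as monotone, this gives $a\sqsubseteq b$ in $A$. By item 6 of the construction, $a(x)\vdash b(x)$ is then a rule or axiom of the proof system. Chaining with the interderivabilities gives $\varphi\vdash a(x)\vdash b(x)\vdash\psi$. So we need cut, or transitivity of $\vdash$, in the proof system; equivalently, we need that the order of $\mathbb{L}$ is the one induced by derivability. This is implicit in speaking of the Lindenbaum--Tarski algebra. Alternatively, by the previous lemma $e$ is an order-isomorphism, so $[\varphi]=e(a)\le e(b)=[\psi]$ in $\mathbb{L}$, which by definition of $\mathbb{L}$ means $\varphi\vdash\psi$ is derivable.

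The main obstacle is the first step: showing that the concrete meaning of an arbitrary formula is $\gamma$ of its abstract meaning. This is where preservation of all language connectives by $\gamma$ is used in full. We must also be careful that nullary connectives, such as $\top$ and $\bot$ if present, are also preserved; otherwise the base of the induction fails. The remaining steps are direct applications of item 6 and the isomorphism $\mathbb{L}\cong\mathcal{A}$.
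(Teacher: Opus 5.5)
Your proof is correct and has the same core as the paper's: order-reflection of $\gamma$ turns $\gamma(a)\subseteq\gamma(b)$ into $a\sqsubseteq b$, and item 6 then supplies $a(x)\vdash b(x)$ as an axiom. The paper only treats sequents between predicates, whereas you also spell out the reduction of an arbitrary valid sequent $\varphi\vdash\psi$ to that case (via surjectivity of $e$, soundness, and cut/transitivity), which the paper leaves implicit; note also that your argument is direct, not by contraposition as your opening sentence announces.
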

\begin{proof}
	Let us assume that $\gamma(a)\leq\gamma(b)$ and show that $a(x)\vdash b(x)$ is derivable. Since $\gamma$ is an order embedding, $\gamma(a)\leq\gamma(b)$ implies  that $a\leq b$. Then by item 6, $a(x)\vdash b(x)$ is a rule of the proof system, and thus it is derivable.
\end{proof}

Notice that the lemma above strengthens  \cite[Proposition 5]{schmidt2008internal}. Indeed, the statement of that proposition  assumes  $\gamma$ to be an injective  right adjoint, which implies but is not implied by the condition that $\gamma$ is an order embedding.

The above procedure is very naive: it is  inefficient, since many axioms will be redundant, and it ignores computational limitations. However, it is applicable to any finite lattice $\mathcal{A}$ and any concretization $\gamma$.

\section{Cartesian products}
The proof theoretic analysis  discussed in the previous section naturally generalizes to $n$ variables. Indeed,  formulas $\varphi(\overline{x})$ for $\overline{x}=(x_1,\ldots, x_n)$ can be interpreted on $\wp(\mathbb{Z}^n)$ as $\{\overline{a}\in\mathbb{Z}^n\mid \varphi(\overline{a})\}$, and  $m$ such predicates $\varphi_1(\overline{x}),\ldots,\varphi_m(\overline{x})$  can be arranged into a finite lattice $\mathbb{A}$, such that the assignment $\varphi(\overline{x})\mapsto \{\overline{a}\in\mathbb{Z}^n\mid \varphi(\overline{a})\}$ is an order embedding and provides the concretization map $\gamma:\mathbb{A}\to\wp(\mathbb{Z}^n)$.

When $\varphi_i(\overline{x})=\psi^i_1(x_1)\land\cdots\land\psi^i_n(x_n)$ and $\varphi_i(\overline{x})\vdash\varphi_j(\overline{x})$ if and only if $\psi^i_k(x_k)\vdash\psi^j_k(x_k)$ for every $1\leq k\leq n$, then $\mathbb{A}=\prod_{1\leq k\leq n}\mathbb{A}_k$, $\gamma=\gamma'\circ\iota$ where $\iota:\wp(\mathbb{Z})^n\to \wp(\mathbb{Z}^n)$ is defined by the assignment $(X_1,\ldots,X_n)\mapsto\prod_{1\leq k\leq n}X_k$ and $\gamma'=\prod_{1\leq k\leq n}\gamma_k$ with $\gamma_k:\mathbb{A}_k\to\wp(\mathbb{Z})$ and $\psi^k_j(x_k)\mapsto\{a\in\mathbb{Z}\mid \psi^k_j(a)\}$.

Intuitively, this means that $\gamma(\varphi(\overline{x}))$ is an $n$-dimensional rectangle in $\wp(\mathbb{Z}^n)$. Notice that $\iota$ has a left adjoint which assigns $R\in\wp( \mathbb{Z}^n)$ to the smallest $n$-dimensional rectangle containing $R$. Hence, $\iota$ always preserves all existing meets. Moreover, if $(X_1,\ldots,X_n)$ is such that $X_i=\varnothing$ for some $1\leq i\leq n$, then $\iota(X_1,\ldots,X_n)=\varnothing$.

\section{On the logic of octagons}

We provide some observations on the logic of predicates of the form $\pm x\pm y\geq c$ where $c\in\mathbb{Z}$ is a constant. Let $\mathcal{A}$ be the abstract algebra corresponding to these predicates and let $\gamma:\mathcal{A}\to\wp(\mathbb{Z}\times\mathbb{Z})$ be the concretization map. 

First of all we notice that  $\gamma$ preserves negation since $$\mathbb{Z}\times\mathbb{Z}\setminus\gamma(\pm x\pm y\geq c)=\gamma(\mp x\mp y\geq -c+1).$$ On the other hand, it cannot preserve conjunction since there does not exist any  of the above predicates that is satisfied by the same $x,y$ such that $x+y\geq 0$ and $x-y\geq 0$. From this, it follows that $\gamma$ cannot preserve implication or disjuntcion (since any of these connectives along with negation are able to define the rest). 

However, the logic of (classical) negation, in the absence of conjunction or disjunction, is very weak. Indeed, the only things we can say about negation is that it is an involution and is order reversing.

Therefore, if we define the logic in the way  described above, the following lattice:

\begin{center}
	
		\centering
		\begin{tikzpicture}[-,>=stealth',shorten >=1pt,thick]
		\SetGraphUnit{3} 
		\tikzset{VertexStyle/.style = {draw,circle,thick,
				minimum size=1cm,
				font=\Large\bfseries},thick}
		\Vertex[L={$\top$},x=0,y=0]{A};
		\Vertex[L={ $A$},x=-1.5,y=-2]{B};
		\Vertex[L={ $B$},x=1.5,y=-2]{C};
		\Vertex[L={ $\bot$},x=0,y=-4]{D};
		\Edge(A)(B);
		\Edge(A)(C);
		\Edge(C)(D);
		\Edge(B)(D);
		
		\Loop[dist=1.5cm,dir=WE](B)
		\Loop[dist=1.5cm,dir=EA](C)
		\end{tikzpicture}
\end{center}

with negation $\lnot(\top)=\bot$, $\lnot(\bot)=\top$, $\lnot(A)=A$ and $\lnot(B)=B$ satisfies the aforementioned properties and hence would be a model of the logic.

To avoid this, we propose to still use conjunction and disjunction as meta-connectives in the logic. In this way we can still consider sequents of the type $$x+y\geq 1, -x-y\geq 0\vdash \mathtt{ff}.$$

Another observation concerning the lattice $\mathcal{A}$ is that every element that is (not top and) not bottom  is meet-irreducible. Indeed, if $p(x,y)$ and $q(x,y)$ are two incomparable such predicates, either they have empty intersection or their intersection is a portion of the plane described by an angle and therefore cannot be described by a single line. Reasoning in a similar way one can show that every element in $\mathcal{A}$ that is not top (and not bottom) is also join-irreducible. Hence $\mathcal{A}$ must have the following shape:

\begin{center}
\begin{tikzpicture}
\draw (0,0) .. controls (-3,1) and (-3,3) .. (0,4);
\draw (0,0) .. controls (3,1) and (3,3) .. (0,4);
\draw (0,0) .. controls (-2,1) and (-2,3) .. (0,4);
\draw (0,0) .. controls (2,1) and (2,3) .. (0,4);
\draw (0,0) .. controls (-1,1) and (-1,3) .. (0,4);
\draw (0,0) .. controls (1,1) and (1,3) .. (0,4);
\end{tikzpicture}
	
\end{center}

Notice that, in such an algebra, an involutive negation similar to the one described above always exists: Just send each point in the north hemisphere to its corresponding point in the south hemisphere and the points at the equator to themselves. 

 The elements discussed so far  enable us to  generate the Lindenbaum-Tarski algebra of this logic in such a way that it is order-isomorphic to $\mathcal{A}$.


\section{Cartesian and non-Cartesian abstractions}

In the previous section, we saw that non-Cartesian abstractions are better behaved when it comes to putting  together the logic of the two variables and their relationships. This suggests a possible strategy, namely:
\begin{itemize}
	\item develop the general theory for the non-Cartesian abstractions;
	\item characterise what percolates to the Cartesian case.
\end{itemize}
We observe that the natural map $$\wp(\mathbb{Z})\times\wp(\mathbb{Z})\rightarrow\wp(\mathbb{Z}\times\mathbb{Z})$$ mapping $(X,Y)$ to $X\times Y$ preserves all meets hence it is a right-adjoint and its left adjoint is the map assigning $R\subseteq\mathbb{Z}\times\mathbb{Z}$ to $\pi_1(R)\times\pi_2(R)$. The question is: Can we use somehow this adjunction situation to extract the logic of two variables as a refinement of the relational logic?

Another observation is that this map is injective when restricted to $$(\wp(\mathbb{Z})\setminus\{\varnothing\})\times(\wp(\mathbb{Z})\setminus\{\varnothing\})$$

\bibliographystyle{abbrv}
 \bibliography{reference}

\end{document}